\newtheorem{theorem}{Theorem}
\newtheorem{lemma}{Lemma}
\def\BibTeX{{\rm B\kern-.05em{\sc i\kern-.025em b}\kern-.08em
    T\kern-.1667em\lower.7ex\hbox{E}\kern-.125emX}}
\begin{document}

    \title{Seeing Is Not Always Believing: Invisible Collision Attack and  Defence on Pre-Trained Models}

    \makeatletter
    \author{
          \IEEEauthorblockN{Minghang Deng, Zhong Zhang, Junming Shao\IEEEauthorrefmark{1}\thanks{\IEEEauthorrefmark{1}Corresponding author}}
          \IEEEauthorblockA{University of University of Electronic Science and Technology of China}
    }

    \maketitle
    
    \begin{abstract}
        Large-scale pre-trained models (PTMs) such as BERT and GPT have achieved great success in diverse fields. The typical paradigm is to pre-train a big deep learning model on large-scale data sets, and then fine-tune the model on small task-specific data sets for downstream tasks. Although PTMs have rapidly progressed with wide real-world applications, they also pose significant risks of potential attacks. Existing backdoor attacks or data poisoning methods often build up the assumption that the attacker invades the computers of victims or accesses the target data, which is challenging in real-world scenarios. In this paper, we propose a novel framework for an invisible attack on PTMs with enhanced MD5 collision. The key idea is to generate two equal-size models with the same MD5 checksum by leveraging the MD5 chosen-prefix collision. Afterwards, the two ``same" models will be deployed on public websites to induce victims to download the poisoned model. Unlike conventional attacks on deep learning models, this new attack is flexible, covert, and model-independent. Additionally, we propose a simple defensive strategy for recognizing the MD5 chosen-prefix collision and provide a theoretical justification for its feasibility. We extensively validate the effectiveness and stealthiness of our proposed attack and defensive method on different models and data sets. The code of all experiments is available on a GitHub repository\footnote{\url{ https://github.com/xxxbrem/framework}}.
    \end{abstract}
    

    \section{Introduction}
    \label{sec:intro}
    Recently, pre-trained models (PTMs) have achieved tremendous success in various tasks, especially for natural language processing \cite{devlin2019bert,radford2019language,liu2020roberta}, and computer vision \cite{bao2021beit,zhou2021ibot,kirillov2023segany}. However, the success of PTMs is highly dependent on massive amounts of data sets \cite{ouyang2022training,saharia2022photorealistic}, which is a non-trivial task to train by ourselves directly. As a result, one mainstream procedure is to download the third-party PTMs from some public sources, and then fine-tune the model on small task-specific data sets. While such a typical paradigm saves time and money, it also aggravates the security risks of attacks (e.g., backdoor attacks or data poisoning).
    
    Currently, there is widespread interest in the security of the deep learning model, such as face recognition \cite{dong2019efficient,jia2022adv} and autopilot \cite{eykholt2018robust,duan2021adversarial}. Although existing attacking methods such as data poisoning \cite{shafahi2018poison,huang2020metapoison,geiping2020witches} and backdoor attacks \cite{saha2020hidden,shen2021backdoor,doan2021lira} have achieved significant success, they usually ignore the stealthiness of attacks. They build upon the assumption that the attacker can invade victims' computers, which may not always hold in real-world scenarios, particularly in some private network environments. In other words, existing methods may fail to carry out an effective attack in a realistic application. 
    
    \begin{figure*}[t!]
    \begin{center}
    \includegraphics[width=1\textwidth]{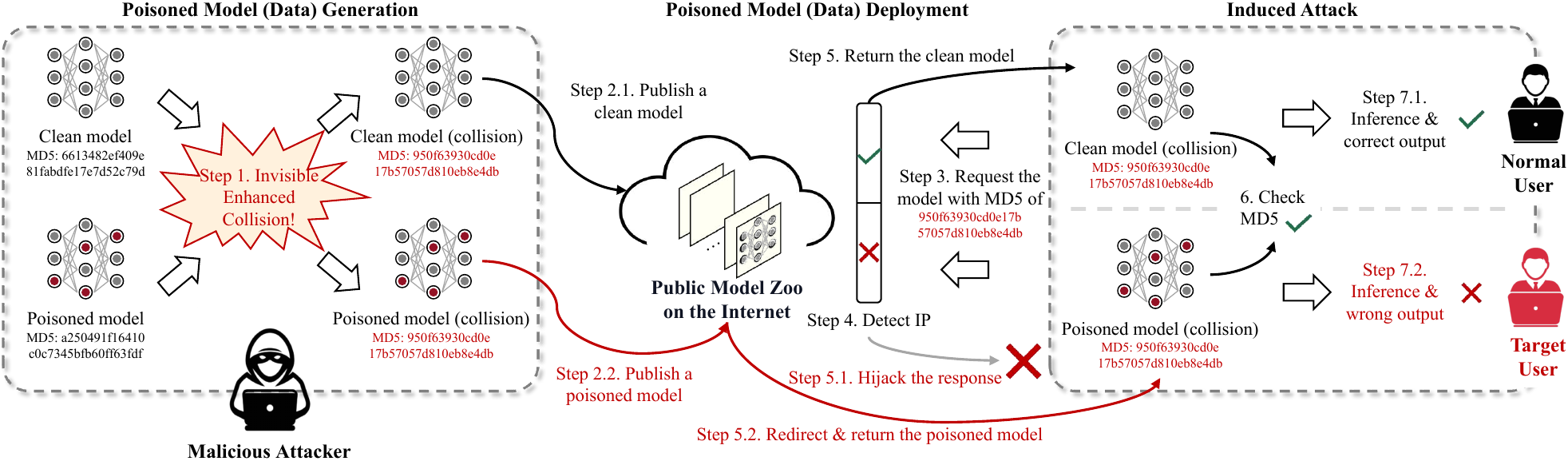}
    \end{center}
    \caption{The proposed framework for the invisible attack on PTMs with enhanced MD5 collision. It mainly consists of three components: (1) Poisoned Model (Data) Generation; (2) Poisoned Model (Data) Deployment; (3) Induced Attack.}
    \label{fig:framework}
    \end{figure*}
    
    Cryptographic hash functions are crucial components in many information security systems and are used for various purposes such as building digital signature schemes, message authentication codes or password hashing functions \cite{yang2017research,sharma2019cryptography}. Among them, the MD5 message-digest algorithm \cite{standardnational} is a widely used hash function that produces a 128-bit hash value, commonly serving as a checksum to verify data integrity against unintentional corruption. To protect the integrity of PTMs, the common way is to utilize it for hash-based commitment during downloading, which thus has potential pitfalls of being attacked. In this paper, we focus on the attack on PTMs with enhanced MD5 collision. A collision attack on a cryptographic hash attempts to find two inputs producing the same hash value (i.e., a hash collision). Specifically, our attack aims to yield two models (or data sets) based on the chosen-prefix collision \cite{collisions}, where the attacker allows choosing two arbitrarily different models (i.e., one can be a clean model, and the other is a poisoned model), and then append different contents to the end of a pair of files that result in the entire models having an equal hash value. In contrast to identical-prefix collision attack \cite{md5break}, chosen-prefix collision attacks are much more dangerous, and have already been demonstrated for MD5, having a significant impact in practice \cite{leurent2019collisions}. In addition, although chosen-prefix collision attacks often produce two equal-size models with the same MD5 value, the model size is slightly different from its original model. Therefore, we further propose an enhanced MD5 collision attack to produce the same original size, making the resulting poisoned model have ``no difference" from the clean model when victims download it.
    
    To better illustrate the basic idea, Figure \ref{fig:framework} gives our proposed framework for the invisible attack on PTMs with enhanced MD5 collision. The attack mainly consists of three phases. First, two equal-size models with the same MD5 hash values (i.e., clean model and poisoned model) are generated (cf. Section \ref{subsec:model}). Afterwards, the two models are deployed on public resources (e.g., some websites that all users can easily visit). Finally, the invisible and specific attack starts with a user downloading request, where normal users will get the clean models, while victims are unconscious of receiving the poisoned ones. 
    
    Building upon the enhanced MD5 collision, our attack method on PTMs has several desirable properties, which are summarized as follows:
    \begin{itemize}
        \item To the best of our knowledge, it is the first time to consider the MD5 chosen-prefix attack on large-scale PTMs and unlike traditional backdoor attacks or data poisoning methods, our proposed method is covert since the poisoned model has ``no difference" from the clean model (with the same MD5 hash value as well as the original model size),  making it difficult to identify in real-world scenarios.
        \item Our attack method is flexible and model-independent. It is not designed for a specific AI model or data set. Instead, it offers a general framework for the invisible attack on PTMs. 
        \item Additionally, we introduce a simple, effective, and general defensive strategy that tentatively recognizes the MD5 chosen-prefix collision within a data-driven learning framework. We provide a theoretical justification for its feasibility and present its ideal performance in defence effectiveness experiments.
        \item Extensive experiments have demonstrated the effectiveness and stealthiness of the proposed attack, which not only works on all AI models but also different modalities of data sets (e.g., images or texts).
    \end{itemize}

    \section{Preliminary}
    
    \subsection{Hash Functions}
    
    A reliable hash function should possess the following properties \cite{collisions}:
    \begin{itemize}
        \item \textbf{Pre-image resistance}: Given a hash value $h$, it should be computationally difficult to find any message $m$ that corresponds to a given hash value $h$ (i.e. $h = H(m)$).
        \item \textbf{Second pre-image resistance}: Given a message $m_1$, it should be computationally difficult to find another message $m_2 \neq m_1$  that produces the same hash value as $m_1$ (i.e. $H(m_1) = H(m_2)$).
        \item \textbf{Collision resistance}: It should be computationally difficult to find two different messages $m_1$ and $m_2$ that produce the same hash value (i.e. $H(m_1) = H(m_2)$).
    \end{itemize}

    Second pre-image resistance and collision resistance are also referred to as weak and strong collision resistance, respectively. Due to the larger domain of a hash function (which can even be infinite) compared to its range, the pigeonhole principle suggests the existence of numerous collisions. A brute force attack can find a pre-image or second pre-image for a hash function with n-bit hashes in approximately $2^n$ hash operations. With the birthday paradox, a brute force approach to generating collisions will succeed in around $2^{n/2}$ hash operations. Any attack requiring fewer hash operations than a brute force attack is considered a cryptographic hash function break.
    
    While widely used hash functions like MD5 and SHA-1 were once deemed secure against pre-image attacks, recent studies have highlighted their vulnerability to collision attacks \cite{md5break,stevens2017first,leurent2020sha}. Consequently, our focus has shifted towards the security implications of employing MD5 in certifying AI models, as its susceptibility to collision attacks can jeopardize the security of these models.

    \begin{figure*}[t!]
    \begin{center}
    \includegraphics[width=1\textwidth]{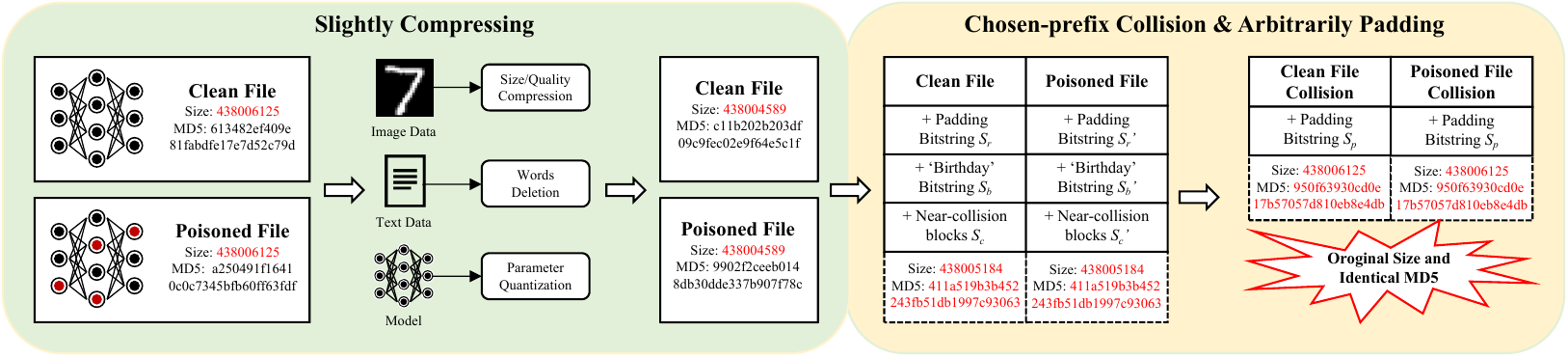}
    \end{center}
    \caption{The pipeline of constructing collisions with the same size as the source file. It mainly consists of $4$ steps: (1) The Choice of Attacking Methods. We generate a pair of clean and poisoned files (models or data sets) by existing deep learning attack methods. (Whether the size is the same or not) (2) Slightly Compressing. According to the type of files, we compress them to ensure the size of their collision versions does not exceed the size of the source file. (3) Chosen-prefix Collision. We generate $2$ collision files with sizes still not exceeding the original file. (4) Arbitrarily Padding. We add contents after the collision file until its size matches that of the source file.}
    \label{fig:resize}
    \end{figure*}
    
    \subsection{Chosen-prefix Collision}
    
    There are two main forms of collision attacks: identical prefix collision (IPC) attacks and chosen-prefix collision (CPC) attacks. In an IPC attack, the attack algorithm finds two different suffixes $S$ and $S'$ that, when concatenated with a given prefix $P$ (usually a message needing certification), produce the same MD5 hash:
    
    \begin{align}
        {\text{MD5}}(P||S) = {\text{MD5}}(P||S').
    \end{align}
    
    IPC attacks are considered less dangerous since they do not alter the original message and only add meaningless strings to the suffixes. However, based on IPC attacks, Stevens et al. \cite{collisions} demonstrated a CPC algorithm that poses a nonnegligible threat to real applications. In a CPC attack, given two arbitrary prefixes $P$ and $P'$, the attack algorithm constructs two different suffixes $S$ and $S'$ that, when concatenated with the respective prefixes, produce the same MD5 hash:
    
    \begin{align}
        {\text{MD5}}(P||S) = {\text{MD5}}(P'||S').
    \end{align}
    
    The suffixes $S$ and $S'$ are constructed as follows \cite{collisions}.
    \begin{itemize}
        \item\textbf{Padding Bitstrings $S_r$ and $S_r'$}: $S_r$ and $S_r'$ are designed to make the lengths of $P||S_r$ and $P'||S_r'$ equal to $512n - 64 - k$, where $n$ is a positive integer, and $0 < k < 32$ is a parameter.
        
        \item\textbf{`Birthday' Bitstrings $S_b$ and $S_b'$}: The birthday bitstrings $S_b$ and $S_b'$ have a bitlength of $64 + k$ and are chosen in a way that applying the MD5 compression function to $P||S_r||S_b$ and $P'||S_r'||S_b'$ results in $\text{IHV}_n$ and $\text{IHV}_0$, respectively.

        To hash a message consisting of $N$ blocks, MD5 goes through $N + 1$ states $\text{IHV}_i$, for $0 \leq i \leq N$, known as intermediate hash values. Each intermediate hash value $\text{IHV}_i$ consists of four 32-bit words: $a_i$, $b_i$, $c_i$, $d_i$. For $i = 0$, these values are fixed public constants. For $i = 1, 2, \ldots, N$, the intermediate hash value $\text{IHV}_i$ is computed using the MD5 compression function:
        
        \begin{equation}
            \text{IHV}_i = \text{MD5Compress}(\text{IHV}_{i-1}, M_i)
        \end{equation}
    
        The goal of the birthday search is to find suitable birthday bitstrings, $S_b$ and $S_b'$, that eliminate the differences in the MD5 hash values. The objective is to achieve a differential vector of the form $\delta\text{IHV}_n = (0, \delta b, \delta c, \delta c)$. The process of determining such $S_b$ and $S_b'$ requires approximately $\sqrt{\pi}2^{32+(k/2)}$ calls to the MD5 compression function. This search space has 64 bits, and collisions can be found with a relatively low computational cost.
        
        \item\textbf{Near-collision Blocks Bitstrings $S_c$ and $S_c'$}: The idea is to eliminate the difference $\delta\text{IHV}n$ in $r$ consecutive steps, where $r$ is a chosen value. This is accomplished by constructing $S_c = S_{c,1}||S_{c,2}||\ldots||S_{c,r}$ and $S_c' = S_{c,1}'||S_{c,2}'||\ldots||S_{c,r}'$ for $r$ pairs of near-collision blocks ($S_{c,j}$, $S_{c,j}'$) with $1 \leq j \leq r$. For each pair of near-collision blocks, a differential path is created such that the difference vector $\delta\text{IHV}_{n+j}$ has a lower weight than $\delta\text{IHV}_{n+j-1}$. This process continues until after $r$ pairs, resulting in $\delta\text{IHV}_{n+r} = (0, 0, 0, 0)$.
        
        \item\textbf{Achieving Collision:} Finally, the collision is achieved by concatenating all these sub-suffixes:
        
        \begin{align}
            {\text{MD5}}(P||S_r||S_b||S_c) = {\text{MD5}}(P'||S_r'||S_b'||S_c').
        \label{equation:cpc}
        \end{align}
    \end{itemize}
    
    \section{Methodology}

    \subsection{Invisible Attack}

    \subsubsection{Enhanced MD5 Collision}
    \label{subsec:model}
    
    MD5 collision has found applications in various software domains such as colliding documents, hash-based commitments, content-addressed storage, and file integrity checking \cite{cpcApplication}. However, a notable drawback of MD5 collision is that the collided size is slightly larger than the original, as indicated in Equation \ref{equation:cpc}, which can potentially raise suspicion and lead to detection.
    
    In this section, we delve into the process of generating poisoned models and data, as illustrated in Figure \ref{fig:framework}. To ensure the stealthiness of the attacks, we propose an enhanced MD5 collision technique.
    
    The core concept involves manipulating the size of the data and models through compression, deletion, and quantization techniques. By employing chosen-prefix collision (CPC), we generate two distinct files and pad them to match the exact size of the original file, as depicted in Figure \ref{fig:resize}. This approach enhances the covert nature of the collision attack, minimizing the likelihood of detection.
    
    \textbf{The Choice of Attacking Methods.}
    Depending on the data type, we generate two files through backdoor attacks targeting PTMs or data poisoning targeting the training dataset. Both files exhibit similar performance with normal inputs but may show variations with specific inputs. The size of the generated files is insignificant since they are padded to the same length during the CPC process. Notably, their MD5 checksums differ.
    
    \textbf{Slightly Compressing.}
    To achieve equal sizes between the collision file and the original one, we employ compression techniques that provide additional storage capacity for collision contents while minimizing the impact on testing accuracy. One effective method for models is parameter quantization. In Figure \ref{fig:resize}, we demonstrate the process of converting a small portion of network parameters from 32-bit precision floating-point to 16-bit, resulting in a reduction of 1536 bytes. When dealing with image data, we can reduce the image's size or quality, while for text data, we can eliminate less significant words. These techniques effectively decrease file size while having minimal impact on final accuracy. It is important to note that in most cases, the contents added by the CPC are less than 1 KB, thus the compression level should exceed 1 KB in practice.
    
    \textbf{Chosen-prefix Collision and Padding.}
    As shown in Equation \ref{equation:cpc}, both of the two files undergo an appending process that includes the padding bitstring $S_r$, the 'birthday' bitstring $S_b$, and the near-collision blocks $S_c$. Despite the compression performed in the preceding step, the size of these files remains smaller compared to the original source file. To ensure consistency, we add an arbitrarily selected suffix to the end of both files, thereby making their sizes identical to the original file. Importantly, due to the inherent properties of MD5, their MD5 checksums remain equivalent.
    
    \subsubsection{General MD5 Attack Framework}
    
    Based on the method described in Section \ref{subsec:model}, We propose a general MD5 CPC attack framework that specifically focuses on deep learning models and data sets. 
    
    PTMs have emerged as the standard approach for numerous AI applications. Nonetheless, not all AI users possess the capability to train their own models, leading them to rely on downloading PTMs and data sets from online platforms and utilising insecure MD5 checksums to verify file integrity. In this study, we present a workflow showcasing an enhanced MD5 collision attack in this particular scenario.

    We make an assumption that the attack remains concealed and targets specific victims exclusively. Furthermore, we assume that the attacker possesses the capability to identify the intended victims and exert control over their network environment. For instance, the attacker may achieve this by altering routers or engaging in DNS pollution, thereby redirecting the download address of a model.
    
    The complete workflow is illustrated in Figure \ref{fig:framework}, which comprises seven steps:
    \begin{itemize}
        \item\textbf{Step $1$: Invisible Enhanced Collision}: The attacker generates two collisions $\text{Col}(C)$, $\text{Col}(P)$ with the same MD5 and size, based on enhanced Md5 collision attack. These collisions are created from a clean file $C$ and a poisoned file $P$ (model or data).
        \item\textbf{Step $2$: Publishing}: The attacker releases the clean file $\text{Col}(C)$ and the poisoned file $\text{Col}(P)$, along with their MD5 checksum, on a model zoo or similar platform for downloading and file integrity checking. Due to the collision, both $\text{Col}(C)$ and $\text{Col}(P)$ have the same MD5 value, denoted as $\text{MD5}(\text{Col}(C)) = \text{MD5}(\text{Col}(P))$.
        \item \textbf{Step 3: User Request}: Regular users, as well as the target individuals, send downloading requests with an expected MD5 value $h$.
        \item\textbf{Step $4$: IP Detection}: The website's controller checks whether the IP address belongs to the target person or a normal user.
        \item\textbf{Step $5$: File Return}: For normal users with $\text{IP}(N)$, the attack responds by providing the clean file $\text{Col}(C)$, while for target users with $\text{IP}(T)$, the attack hijacks the response according to IP, and returns the poisoned model $\text{Col}(P)$ instead. 
        \item\textbf{Step $6$: MD5 Check}: All users verify the MD5 checksum of the downloaded file. Since the collision occurred during Step 1, it is guaranteed that $h = \text{MD5}(\text{Col}(C)) = \text{MD5}(\text{Col}(P))$.
        \item\textbf{Step $7$: Inference Attack}: Normal users are unaware of any abnormalities and only access the clean model or data to generate accurate predictions $\hat{y_c}$. In contrast, the target individuals unknowingly encounter the hidden attack through the poisoned model, resulting in incorrect outputs $\hat{y_p}$. This discrepancy arises because the MD5 checksum cannot differentiate between the two files, leading to an induced attack.
    \end{itemize}
    
    \begin{algorithm}[tb]
        \caption{General MD5 Attack Framework}
        \label{alg:framework}
        \raggedright\textbf{Input}: A clean file $C$, a poisoned file $P$, normal users $N$, target users $T$, downloading file's MD5 $h$, testing data $D$\\
        \raggedright\textbf{Output}: Normal users' and victims' prediction $\hat{y_c}$, $\hat{y_p}$\\
        \begin{algorithmic}[1] 
            \STATE Enhanced collision attack: clean collision: $\text{Col}(C)$, poisoned collision: $\text{Col}(P)$\\
            $s.t. \text{MD5}(\text{Col}(C)) == \text{MD5}(\text{Col}(P))$
            \STATE Normal IP: $\text{IP}(N)$, target IP: $\text{IP}(T)$, current IP: $ip$
            \IF{$ip == \text{IP}(N)$}
                \STATE Return clean collision $\text{Col}(C)$ to normal users $N$
            \ELSE
                \IF{$ip == \text{IP}(T)$}
                    \STATE Return poisoned collision $\text{Col}(P)$ to target users $T$ 
                \ENDIF
            \ENDIF
            \STATE Use $\text{Col}(C)$ and $\text{Col}(P)$ to get a clean model $\Pi_c$ and a poisoned model $\Pi_p$
            \STATE Make predictions: $\hat{y_c} = \Pi_c(D)$, $\hat{y_p} = \Pi_p(D)$
            \STATE \textbf{return} $\hat{y_c}$, $\hat{y_p}$
        \end{algorithmic}
    \end{algorithm}
    
    Finally, the overall workflow is given in Algorithm \ref{alg:framework}.
    
    \subsection{Defence}

    Currently, several software-based defensive methods have been developed to counter collisions. One straightforward approach is to directly compare two files. However, comparing two models becomes challenging due to parameter changes during fine-tuning. Additionally, acquiring the original file poses a difficulty, as the attacker only provides the manipulated model or dataset.

    We made a crucial observation that the pattern of regular files (models or data sets) differs significantly from the pattern of collision parts, as depicted in Figure \ref{fig:JS}. In other words, clean samples bear little resemblance to collision samples and collision samples differ greatly from one another. 

    Another defensive method involves detecting the last near-collision block of an attack and leveraging key insights from hash cryptanalysis \cite{stevens2013counter}. However, this method is specific to a particular hash function and requires knowledge about collision generation. Therefore, we propose training a deep learning model that is general, straightforward, and transferable. This model can effectively recognize the collision parts and provide defence against collision attacks.
    
    In a realistic scenario, users typically do not have access to the original file. Hence, they train a model on similar data and utilize it to detect collisions, thus defending against such attacks. Since testing a file involves examining numerous samples, it can be resource-consuming. To address this, we introduce Jaccard Similarity (JS) as an efficient means to filter samples and accelerate the testing phase.
    
    \subsubsection{Theoretical Justification for Pattern Discrepancy}
    \label{subsec:justify}
        
    In this section, we leverage the theory of the birthday problem to theoretically justify our observation: Under certain circumstances, the probability of collision samples sharing the same characteristics (birthday probability) between collision samples only, or between collision samples and clean samples, is relatively low, while the birthday probability between clean samples is higher.
    
    \begin{lemma}\label{lemma:birthday}
        (The Birthday Problem). In the birthday search space $S$, and given a set of $N$ randomly chosen people, the probability $P(A)$ of at least two sharing a birthday is approximately $1-e^{-N^2/2S}$.
        \begin{proof}
            The probability that person $i$ is born on a different day than person $1, 2, \ldots, i-1$:
            \begin{equation}
                \begin{aligned}
                    P(E_i) = \frac{S-(i-1)}{S}.
                \end{aligned}
            \end{equation}
            The probability that no $2$ people share a birthday:
            \begin{equation}\label{equation:overlineA}
                \begin{aligned}
                    P(\overline{A}) = \prod_{i=1}^{N}P(E_i) = \prod_{i=1}^{N-1}(1-\frac{i}{S}).
                \end{aligned}
            \end{equation}   
            Based on the Taylor expansion of $e^x$:
            \begin{equation}
                \begin{aligned}
                    e^x = \sum_{N=0}^{\infty} \frac{x^N}{N!}.
                \end{aligned}
            \end{equation}          
            When $x \ll 1$:
            \begin{equation}
                \begin{aligned}
                    e^x \approx 1 + x + \frac{x^2}{2!} + \frac{x^3}{3!} + \cdots \approx 1 + x.
                \end{aligned}
            \end{equation}   
            Thus, the approximation of $P(A)$:
            \begin{equation}
                \begin{aligned}
                    P(A) = 1- \prod_{i=1}^{N-1}e^{-\frac{i}{S}} = 1 - e^{\frac{-N(N-1)}{2S}} \approx 1 - e^{\frac{-N^2}{2S}}.
                \end{aligned}
            \end{equation}         
        \end{proof}
    \end{lemma}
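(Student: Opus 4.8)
The plan is to work with the complementary event $\overline{A}$ — that all $N$ birthdays are distinct — compute $P(\overline{A})$ as a product, and then recover $P(A) = 1 - P(\overline{A})$ by taking complements.

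First I would fix an arbitrary ordering of the $N$ people and reveal their birthdays sequentially. The first birthday is unconstrained; conditioned on the first $i-1$ birthdays being pairwise distinct, the $i$-th person avoids all of them exactly when their birthday lands in one of the $S-(i-1)$ still-unused slots, an event $E_i$ of conditional probability $(S-(i-1))/S$. The chain rule then gives $P(\overline{A}) = \prod_{i=1}^{N} P(E_i) = \prod_{i=1}^{N-1}\bigl(1 - \tfrac{i}{S}\bigr)$, the $i=0$ factor contributing a trivial $1$.

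Next I would linearize each factor. Since the Taylor series of the exponential gives $e^{-x} \approx 1-x$ when $x \ll 1$, and each ratio $i/S$ is minuscule in the regime of interest (where $N$ is far smaller than the search-space size $S$), I replace $1 - i/S$ by $e^{-i/S}$. The product then collapses in the exponent via an arithmetic-series sum: $P(\overline{A}) \approx \prod_{i=1}^{N-1} e^{-i/S} = \exp\!\bigl(-\tfrac{1}{S}\sum_{i=1}^{N-1} i\bigr) = e^{-N(N-1)/(2S)}$. For large $N$ the leading quadratic term dominates, so $N(N-1) \approx N^2$, and I conclude $P(A) = 1 - P(\overline{A}) \approx 1 - e^{-N^2/(2S)}$.

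The main — indeed essentially the only — obstacle is bookkeeping the accumulated error from the $N-1$ separate uses of $1 - i/S \approx e^{-i/S}$: each incurs an error of order $(i/S)^2$, so one should check that $\sum_{i}(i/S)^2 = O(N^3/S^2)$ stays negligible, which holds provided $N = o(S^{2/3})$ and in particular throughout the birthday regime $N \sim \sqrt{S}$ relevant here. A lesser point is the replacement $N(N-1) \approx N^2$, which perturbs the exponent by only a relative $1/N$ and is harmless for the large $N$ under consideration.
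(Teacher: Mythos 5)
Your proposal is correct and follows essentially the same route as the paper: compute $P(\overline{A})=\prod_{i=1}^{N-1}(1-i/S)$ via the sequential/chain-rule argument, replace each factor by $e^{-i/S}$ using the first-order Taylor expansion, sum the arithmetic series to get $e^{-N(N-1)/(2S)}$, and approximate $N(N-1)\approx N^2$. Your added bookkeeping of the accumulated approximation error (requiring $N=o(S^{2/3})$) is a small refinement the paper omits, but the argument is otherwise identical.
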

    
    The comparison of clean and collision samples could be viewed as a birthday problem. In this analogy, the search space for birthdays corresponds to the vocabulary of tokens, while the number of individuals relates to the sequence length of each sample. By applying the concept of Lemma \ref{lemma:birthday}, we can estimate the probability of two individuals (tokens in samples) sharing the same ``birthday" (token in the vocabulary) among clean samples, collision samples, and both types of samples combined.
    
    \begin{theorem}\label{theorem: thm1}
        Let the vocabulary size of clean samples be $S_a$, and the vocabulary size of collision samples be $S_b$ ($S_a \ll S_b < S$). Suppose the proportion of clean samples is $p_a$, and the proportion of collision samples is $p_b$ ($p_a + p_b = 1$), then the probability $P(A_{clean})$ of at least two tokens sharing a birthday is approximately $1-e^{-p_a^2N^2/2S_a}$. Similarly, the probability $P(A_{collision})$ of at least two tokens sharing a birthday is approximately $1-e^{-p_b^2N^2/2S_b}$.
    \end{theorem}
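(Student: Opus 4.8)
The plan is to obtain both estimates as immediate corollaries of Lemma~\ref{lemma:birthday}, applied separately to the clean tokens and to the collision tokens among the $N$ tokens under consideration.

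First I would make the reduction explicit. Treat the $N$ tokens as the ``people'' of the birthday problem. By assumption a fraction $p_a$ of them are clean tokens, ranging (independently and roughly uniformly) over the clean vocabulary of size $S_a$, while the remaining fraction $p_b = 1 - p_a$ are collision tokens ranging over the collision vocabulary of size $S_b$. Hence there are $N_a := p_a N$ clean tokens and $N_b := p_b N$ collision tokens. The event $A_{clean}$ that at least two clean tokens coincide in the vocabulary is precisely the event $A$ of Lemma~\ref{lemma:birthday} with the substitution $N \mapsto N_a$ and $S \mapsto S_a$; applying the lemma gives
\begin{equation}
    P(A_{clean}) \;\approx\; 1 - e^{-N_a^2/(2S_a)} \;=\; 1 - e^{-p_a^2 N^2/(2S_a)}.
\end{equation}
Repeating the identical argument on the collision tokens, with $N \mapsto N_b$ and $S \mapsto S_b$, Lemma~\ref{lemma:birthday} yields
\begin{equation}
    P(A_{collision}) \;\approx\; 1 - e^{-N_b^2/(2S_b)} \;=\; 1 - e^{-p_b^2 N^2/(2S_b)},
\end{equation}
which is the second claim. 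The only genuinely non-mechanical remark at this stage is that restricting attention to the clean (respectively, collision) sub-population does not disturb the independence and uniformity assumptions used in the birthday computation, so the lemma applies verbatim to each sub-population.

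The step I expect to be the main obstacle is justifying the validity of the approximation rather than the algebra. The Taylor step $e^x \approx 1+x$ inside Lemma~\ref{lemma:birthday} requires $i/S \ll 1$ for every index $i$ up to the sub-population size, i.e.\ it is accurate only when the number of ``people'' is small relative to the search space. For the collision tokens this is unproblematic since $S_b$ is large, but for the clean tokens $S_a$ is small by hypothesis ($S_a \ll S_b$), so strictly one should either add the hypothesis $p_a N \ll S_a$ (keeping the same asymptotic regime in which the lemma itself was stated) or retain the exact product $\prod_{i=1}^{N_a-1}(1 - i/S_a)$ and present the displayed formula as its standard first-order approximation. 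Once that regime is fixed, the theorem follows by pure substitution into Lemma~\ref{lemma:birthday}, and the comparison of exponents — $p_a^2 N^2/(2S_a)$ versus $p_b^2 N^2/(2S_b)$ — immediately exhibits the asserted gap in birthday probabilities between the clean and collision populations.
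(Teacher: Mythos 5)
Your proof matches the paper's intended argument: the paper gives no explicit proof of Theorem~\ref{theorem: thm1}, but the displayed formulas $P(E_{clean}) = \prod_{i=1}^{p_aN-1}(1-i/S_a) \approx e^{-p_a^2N^2/2S_a}$ and its collision analogue in the proof of Theorem~\ref{theorem: thm2} show that the intended derivation is exactly your substitution $N \mapsto p_aN$, $S \mapsto S_a$ (resp.\ $N \mapsto p_bN$, $S \mapsto S_b$) into Lemma~\ref{lemma:birthday}. Your added caveat about the regime in which the Taylor approximation is valid is a reasonable refinement, but the route is essentially identical to the paper's.
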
 
    
    If we consider the vocabulary size to be $S^2$, according to Theorem \ref{theorem: thm1}, the probability of at least two tokens sharing a birthday in collision samples is $P(A') = 1-e^{-N'^2/2S^2}$, where $N' = N^2$ is the minimum number of samples required to maintain the same birthday probability. This implies that there is more space to differentiate between clean and collision samples with the increased vocabulary size.
    
    \begin{theorem}\label{theorem: thm2}
        Suppose the proportion of clean samples is $p_a$, and the proportion of collision samples is $p_b$ ($p_a + p_b = 1$ and $p_a \gg p_b$), then the probability $P(A_{clean\_collision})$ of at least two tokens sharing a birthday is approximately $1 - e^{\frac{-N^2}{2S}} + e^{\frac{-p_a^2N^2}{2S_a}} + e^{\frac{-p_b^2N^2}{2S_b}}$.
    \end{theorem}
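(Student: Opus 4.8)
The plan is to run the same birthday-problem bookkeeping as in Theorem~\ref{theorem: thm1}, but now for the \emph{combined} stream of $N$ tokens and without throwing away the interaction between the two token populations. First I would fix the model: of the $N$ tokens, $p_aN$ are clean tokens drawn uniformly from the clean vocabulary (size $S_a$) and $p_bN$ are collision tokens drawn uniformly from the collision vocabulary (size $S_b$), with $S_a\ll S_b<S$ and $p_a\gg p_b$, $p_a+p_b=1$. The event $A_{clean\_collision}$ that two of the $N$ tokens share a birthday then decomposes, according to the types of the colliding pair, as $A_{clean\_collision}=A_{clean}\cup A_{collision}\cup A_{cross}$, where $A_{clean}$ and $A_{collision}$ are the within-type collisions already studied and $A_{cross}$ is the event that some clean token coincides with some collision token.

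Next I would evaluate the three pieces. The within-type events are exactly the ones handled by Theorem~\ref{theorem: thm1}, so $P(\overline{A_{clean}})\approx e^{-p_a^2N^2/2S_a}$ and $P(\overline{A_{collision}})\approx e^{-p_b^2N^2/2S_b}$. For the genuinely new term $A_{cross}$ there are $p_ap_bN^2$ clean/collision pairs, and once the full ambient vocabulary of size $S$ is taken into account each such pair collides with probability of order $1/S$ (this is where $S$, rather than $S_a$ or $S_b$, enters); repeating the Poissonization step of Lemma~\ref{lemma:birthday} and absorbing the constant factor $2p_ap_b$ into the ``$\approx$'' gives $P(A_{cross})\approx 1-e^{-N^2/2S}$. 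Assembling the union by inclusion--exclusion and using $p_a\gg p_b$, $S_a\ll S_b<S$ to discard the pairwise intersections and the subleading within-type contributions, the surviving terms are $P(A_{cross})$ together with the leading complements $P(\overline{A_{clean}})$ and $P(\overline{A_{collision}})$, which yields
\[
P(A_{clean\_collision})\;\approx\;\bigl(1-e^{-N^2/2S}\bigr)+e^{-p_a^2N^2/2S_a}+e^{-p_b^2N^2/2S_b},
\]
as claimed.

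The step I expect to be the main obstacle is precisely this last assembly: the three events are \emph{not} independent (they are driven by one and the same token assignment), so naively adding their probabilities over-counts, and some care is needed to justify which terms are negligible. I would control this through the complements and the asymmetry $p_a\gg p_b$: the clean population dominates the stream, so $A_{clean}$ is the ``large'' event while $A_{collision}$ and $A_{cross}$ are rare, which legitimizes linearizing and retaining only first-order corrections. I would also flag explicitly that the stated identity is only an asymptotic approximation in the regime $S_a\ll S_b<S$, $p_a\gg p_b$, and need not lie in $[0,1]$ outside it, so these hypotheses should be presented as genuinely used rather than cosmetic.
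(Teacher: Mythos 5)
There is a genuine gap, and it sits exactly where you predicted. First, your reading of the event is not the paper's: in the paper's proof $E_{clean\_collision}$ is defined as ``no two clean and collision samples share a birthday,'' so $A_{clean\_collision}$ is your $A_{cross}$ alone (a clean token coinciding with a collision token), not the union $A_{clean}\cup A_{collision}\cup A_{cross}$. The union reading is also incompatible with Theorem~\ref{theorem: thm3}, which asserts $P(A_{clean\_collision})\ll P(A_{clean})$; a union containing $A_{clean}$ can never have probability much smaller than $P(A_{clean})$. Second, the assembly step does not go through. Inclusion--exclusion for $P(A_{clean}\cup A_{collision}\cup A_{cross})$ yields $P(A_{clean})+P(A_{collision})+P(A_{cross})$ minus intersection terms; no legitimate manipulation replaces two of these summands by their \emph{complements} $e^{-p_a^2N^2/2S_a}$ and $e^{-p_b^2N^2/2S_b}$. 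In the stated regime $P(A_{clean})\approx 1$, so the union is $\approx 1$, whereas your surviving terms evaluate to roughly $1+P(A_{cross})$; any numerical proximity is accidental, not derived. Likewise, ``absorbing $2p_ap_b$ into $\approx$'' in the cross term is not absorbing a constant: with $p_a\gg p_b$ one has $2p_ap_b\approx 2p_b\ll 1$, and for small exponents $1-e^{-p_ap_bN^2/S}\approx p_ap_bN^2/S$ differs from $1-e^{-N^2/2S}\approx N^2/2S$ by that same vanishing factor, i.e., by an unbounded relative error.

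For comparison, the paper takes a different and much shorter route: it posits the additive decomposition $P(E_{sample})=P(E_{clean})+P(E_{collision})+P(E_{clean\_collision})$ of the no-collision probabilities, evaluates $P(E_{sample})\approx e^{-N^2/2S}$ and the two within-type terms via Lemma~\ref{lemma:birthday} and Theorem~\ref{theorem: thm1}, solves for $P(E_{clean\_collision})$, and sets $P(A_{clean\_collision})=1-P(E_{clean\_collision})$, which gives the stated expression immediately. Note that this rests on an additive relation among the $E$-events where the natural relation is $E_{sample}=E_{clean}\cap E_{collision}\cap E_{clean\_collision}$ (hence a product, not a sum, under independence), so your closing observation that the resulting expression need not lie in $[0,1]$ is well founded and applies to the paper's own derivation as well; but as a reconstruction of the paper's argument, your proof uses a different decomposition and does not close the key step.
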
 
    
    \begin{proof}
        Let $E_{sample}$: no $2$ samples share a birthday, $E_{clean}$: no $2$ clean samples share a birthday, $E_{collision}$: no $2$ collision samples share a birthday, $E_{clean\_collision}$: no $2$ clean and collision samples share a birthday. 
        \begin{equation}
            P(E_{sample}) = P(E_{clean}) + P(E_{collision}) + P(E_{clean\_collision})
        \end{equation}
        According to Equation \ref{equation:overlineA}: 
        \begin{equation}
            P(E_{sample}) = \prod_{i=1}^{N-1}(1-\frac{i}{S}) \approx e^{\frac{-N^2}{2S}}.
        \end{equation}
        According to Theorem \ref{theorem: thm1}:
        \begin{gather}\label{equation:11}
            P(E_{clean}) = \prod_{i=1}^{p_aN-1}(1-\frac{i}{S_a}) \approx e^{\frac{-p_a^2N^2}{2S_a}}, \\
            P(E_{collision}) = \prod_{i=1}^{p_bN-1}(1-\frac{i}{S_b}) \approx e^{\frac{-p_b^2N^2}{2S_b}}.
        \end{gather}
        Thus, 
        \begin{gather}
            P(A_{clean\_collision}) = 1 - P(E_{clean\_collision}) \nonumber\\
            \approx 1 - e^{\frac{-N^2}{2S}} + e^{\frac{-p_a^2N^2}{2S_a}} + e^{\frac{-p_b^2N^2}{2S_b}}.\label{equation:13}
        \end{gather}
    \end{proof}

    \begin{figure*}[t!]
    \begin{center}
    \includegraphics[width=1\textwidth]{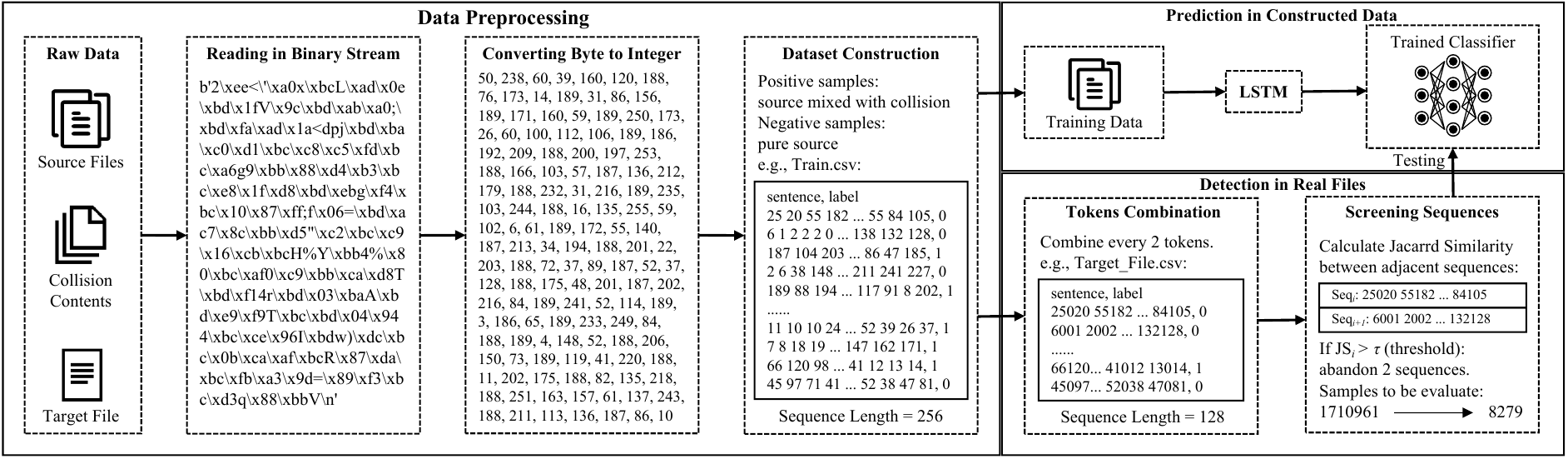}
    \end{center}
    \caption{The outline of MD5 collision recognition. It contains three parts: (1) Data Preprocessing. We read the raw data in the binary stream and convert it to integer tokens. Similar to the sentiment data sets, we consider tokens in a fixed length as a sentence and assign a label to each one. (2) Prediction. We train a deep model to classify collisions. (3) Detection. In the realistic data, we introduce an unsupervised approach to help classify and improve accuracy and efficiency.}
    \label{fig:defence}
    \end{figure*}
    
    \begin{theorem}\label{theorem: thm3}
        (Collision Pattern Discrepancy Theorem). The birthday probability between collision samples $P(A_{collision})$ and the birthday probability between clean and collision samples $P(A_{clean\_collision})$ are close, and both much less than the birthday probability between clean samples $P(A_{clean})$, such that:
        \begin{equation}
            P(A_{collision}) \approx P(A_{clean\_collision}) \ll P(A_{clean}).
        \end{equation}
        \begin{proof}
            According to Theorem \ref{theorem: thm1} and Theorem \ref{theorem: thm2}, $S_a \ll S_b < S$ and $p_a \gg p_b$.
            In this way,
            \begin{equation}
                \lim\limits_{\frac{S_b}{S_a} \rightarrow +\infty \atop \frac{p_a}{p_b} \rightarrow +\infty}\frac{\frac{-p_a^2N^2}{2S_a}}{\frac{-p_b^2N^2}{2S_b}} = \lim\limits_{\frac{S_b}{S_a} \rightarrow +\infty \atop \frac{p_a}{p_b} \rightarrow +\infty}\frac{S_b}{S_a}(\frac{p_a}{p_b})^2 \rightarrow +\infty.
            \end{equation}
            Then, according to Equation \ref{equation:11},
            \begin{equation}
                P(E_{clean}) \approx e^{\frac{-p_a^2N^2}{2S_a}} \ll e^{\frac{-p_b^2N^2}{2S_b}} \approx P(E_{collision})
            \end{equation}
            Thus,            
            \begin{gather}
                1 \approx 1 - P(E_{clean}) = P(A_{clean}) \gg \nonumber\\
                P(A_{collision}) = 1 - P(E_{collision}) \approx 0
            \end{gather}
            Let $p_{S_b} = \frac{S_b}{S}$, such that
            \begin{equation}
                \ln2p_b^2 = p_{S_b}.
            \end{equation}
            Then,
            \begin{equation}\label{equation:19}
                e^{\frac{-p_a^2N^2}{2S_a}} + 2e^{\frac{-p_b^2N^2}{2S_b}} = e^{\frac{N^2}{2S}}.
            \end{equation}
            Thus, according to Equation \ref{equation:11}, Equation \ref{equation:13} and Equation \ref{equation:19},
            \begin{equation}
                P(A_{collision}) \approx P(A_{clean\_collision}) \ll P(A_{clean}).
            \end{equation}
        \end{proof}
    \end{theorem}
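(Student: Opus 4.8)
The plan is to estimate each of the three birthday probabilities $P(A_{clean})$, $P(A_{collision})$, and $P(A_{clean\_collision})$ separately in the stated regime $S_a \ll S_b < S$ and $p_a \gg p_b$ — read asymptotically as $S_b/S_a \to +\infty$ and $p_a/p_b \to +\infty$, with the sequence length $N$ held fixed (or growing slowly) — and then chain the estimates. First I would treat $P(A_{clean})$: by Theorem~\ref{theorem: thm1} it is approximately $1 - e^{-p_a^2 N^2/(2S_a)}$, and since $p_a \to 1$ while $S_a$ remains small, the exponent $p_a^2 N^2/(2S_a)$ diverges, so $e^{-p_a^2 N^2/(2S_a)} \to 0$ and $P(A_{clean}) \to 1$. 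Symmetrically, $P(A_{collision}) \approx 1 - e^{-p_b^2 N^2/(2S_b)}$; here $p_b \to 0$ and $S_b$ is large, forcing the exponent to $0$, so $e^{-p_b^2 N^2/(2S_b)} \to 1$ and $P(A_{collision}) \to 0$.

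To upgrade these one-sided limits to the strict separation $P(A_{collision}) \ll P(A_{clean})$, I would compare the two exponents through the ratio
\begin{equation}
\frac{p_a^2/S_a}{p_b^2/S_b} \;=\; \frac{S_b}{S_a}\Bigl(\frac{p_a}{p_b}\Bigr)^{2} \;\longrightarrow\; +\infty ,
\end{equation}
which gives $e^{-p_a^2 N^2/(2S_a)} \ll e^{-p_b^2 N^2/(2S_b)}$, i.e. $P(E_{clean}) \ll P(E_{collision})$ in the notation of Equation~\ref{equation:11}, and hence $P(A_{clean}) = 1 - P(E_{clean}) \gg 1 - P(E_{collision}) = P(A_{collision})$.

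It then remains to show $P(A_{clean\_collision}) \approx P(A_{collision})$. Starting from Theorem~\ref{theorem: thm2} (Equation~\ref{equation:13}),
\begin{equation}
P(A_{clean\_collision}) \;\approx\; 1 - e^{-N^2/(2S)} + e^{-p_a^2 N^2/(2S_a)} + e^{-p_b^2 N^2/(2S_b)} ,
\end{equation}
the term $e^{-p_a^2 N^2/(2S_a)}$ is negligible by the first step, so it suffices to establish $e^{-N^2/(2S)} \approx 2\,e^{-p_b^2 N^2/(2S_b)}$; substituting this into the display collapses it to $1 - e^{-p_b^2 N^2/(2S_b)} = P(A_{collision})$. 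This is precisely the calibration recorded in Equation~\ref{equation:19}: introducing $p_{S_b} = S_b/S$ and imposing $\ln 2\,p_b^{2} = p_{S_b}$ makes $e^{-p_a^2 N^2/(2S_a)} + 2\,e^{-p_b^2 N^2/(2S_b)}$ equal $e^{-N^2/(2S)}$, which, after dropping the negligible first term, is exactly the relation needed. Combining with the previous step yields $P(A_{collision}) \approx P(A_{clean\_collision}) \ll P(A_{clean})$.

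I expect the third step to be the main obstacle. The relation $e^{-N^2/(2S)} \approx 2\,e^{-p_b^2 N^2/(2S_b)}$ does not follow from $S_a \ll S_b < S$ and $p_a \gg p_b$ alone; it requires the extra calibration tying the global search space $S$ to the collision vocabulary $S_b$ via $p_{S_b}$, and one must check that under that calibration the $e^{-p_a^2 N^2/(2S_a)}$ term is genuinely lower order and can be discarded. I would also flag that the additive decomposition $P(E_{sample}) = P(E_{clean}) + P(E_{collision}) + P(E_{clean\_collision})$ inherited from the proof of Theorem~\ref{theorem: thm2} is used here as the modelling assumption of the paper rather than as a probabilistic identity, and I would carry it through verbatim. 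Everything else — monotonicity of $x \mapsto e^{-x}$ together with the limits $p_a \to 1$ and $p_b \to 0$ — is routine once the asymptotic regime is fixed.
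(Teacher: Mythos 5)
Your proposal follows essentially the same route as the paper's own proof: the same exponent-ratio comparison $\frac{S_b}{S_a}(p_a/p_b)^2 \to +\infty$ to separate $P(A_{clean})$ from $P(A_{collision})$, and the same calibration $\ln 2\,p_b^2 = p_{S_b}$ yielding Equation~\ref{equation:19} to identify $P(A_{clean\_collision})$ with $P(A_{collision})$. Your flagged caveats (the calibration being an extra assumption rather than a consequence of the hypotheses, and the additive decomposition of $P(E_{sample})$ being a modelling choice) are accurate and apply equally to the paper's version, which also silently carries a sign slip in the exponent of Equation~\ref{equation:19}.
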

    
    By applying the principles of the birthday problem, we substantiate our observation that the pattern of collision samples differs significantly from both other collision samples and clean samples. This theoretical analysis lends support to our proposed approach of training a deep learning model to effectively recognize the distinctive features of collision parts and provide defence against collision attacks via the JS method.

    \begin{algorithm}[tb]
        \caption{Classifier for Discriminating Collisions}
        \label{alg:cct}
        \raggedright\textbf{Input}: Clean files $S$(source), $T$(target), and their collision parts: $\text{IPC}(S)$, $\text{CPC}(T)$\\
        \raggedright\textbf{Output}: A trained classifier $\Pi_\theta$, prediction results: $\hat{y}$\\
        \begin{algorithmic}[1] 
            \STATE Read $S$, $T$, $\text{IPC}(S)$, $\text{CPC}(T)$ in the binary stream
            \STATE Convert bytes to integers
            \STATE Generate training data $Tr$:\\
            positive samples $s + i$: $s \in S$ and $i \in \text{IPC}(S)$,\\
            negative samples $s'$: $s' \in S$, $s \cap s' = \emptyset$.
            \STATE Generate testing data $Te$:\\
            positive samples $t + c$: $t \in T$ and $c \in \text{CPC}(T)$,\\
            negative samples $t'$: $t' \in T$, $t \cap t' = \emptyset$.
            \FOR{$i=1$ {\bfseries to} $epoch$}
                \STATE $L = Cross$-$Entropy(\Pi_{\theta_i}(Tr_x), Tr_y)$
            	\STATE Update weights ${\theta_{i+1}} \leftarrow \theta_i + \eta \cdot \nabla_L$
            \ENDFOR
            \STATE Prediction $\hat{y} = \Pi_\theta(Te)$
            \STATE \textbf{return} $\Pi_\theta$
        \end{algorithmic}
    \end{algorithm}

    \subsubsection{Transferable Defence}
    
    In this section, we utilize MD5 collisions as an example to explore the detection of collisions in regular files. We train a model on similar training datasets and perform transfer learning tasks to effectively detect collisions. The fundamental workflow of our defence method is illustrated in Figure \ref{fig:defence}.
    
    In order to achieve a collision, CPC adds specific suffixes to the source file. It is worth noting that these suffixes do not have to be added at the end of the source file. Notably, these suffixes can be inserted at any position within the file, rather than just at the end. This positioning flexibility poses a challenge for detection. However, these collision suffixes exhibit discernible patterns that differ from normal file contents, according to Section \ref{subsec:justify}. This distinction allows us to train a neural network to differentiate between regular file contents and collision suffixes.
    
    We frame the detection of collision attacks as a sequence classification task and train an LSTM model \cite{hochreiter1997long} for detection. Specifically, we generate the training data set by dividing the source file and the collision suffixes into 256-length byte sequences as negative samples and positive samples, respectively. Algorithm \ref{alg:cct} outlines how to deal with data and train a collision classifier. Due to the time-consuming process of generating CPC suffixes, we use IPC to generate collision suffixes, as they share similar patterns. For positive samples, half of the sequence bytes are sampled from collision suffixes, and the other half are meaningful bytes sampled from the source file. This setting stimulates the real scenario that collision suffixes might be split in random positions.
    
    \begin{algorithm}[tb]
        \caption{Collision Detection in Files}
        \label{alg:cd}
        \raggedright\textbf{Input}: A file $F$ to be detected, trained model $\Pi_\theta$\\
        \raggedright\textbf{Output}: Prediction $\hat{y}$
        \begin{algorithmic}[1] 
            \STATE Convert $F$ to integers of the byte stream.
            \STATE Initialize potential collisions set: $P = \emptyset$, threshold $\tau$ = 0. 
            \FOR{$i=1$ {\bfseries to} $len(F) - 1$}
            	\STATE Similarity $JS_i = Jaccard(F_i, F_{i+1})$ 
            	\IF{$JS_i \leq \tau$ and $JS_{i+1} \leq \tau$}
            	    \STATE $P$ append $F_{i-1}, F_{i}, F_{i+1}$
            	\ENDIF
            \ENDFOR
            \STATE $\hat{y} = \Pi_\theta(Set(P))$
            \STATE \textbf{return} $\hat{y}$
        \end{algorithmic}
    \end{algorithm}
    
    We now turn our attention to the problem of collision detection in realistic scenarios, where model files like BERT can consist of hundreds of millions of bytes. It is resource-consuming to split all the files into 256-length sequences and send them into the collision classifier. To address this issue, we calculate the Jaccard Similarity (JS) between sequences as initial filtering. Only samples with low JS are considered as candidate samples. The detection algorithm is presented in Algorithm \ref{alg:cd}. 

    \begin{figure}[!t]
        \begin{center}
            \includegraphics[width=0.46\textwidth]{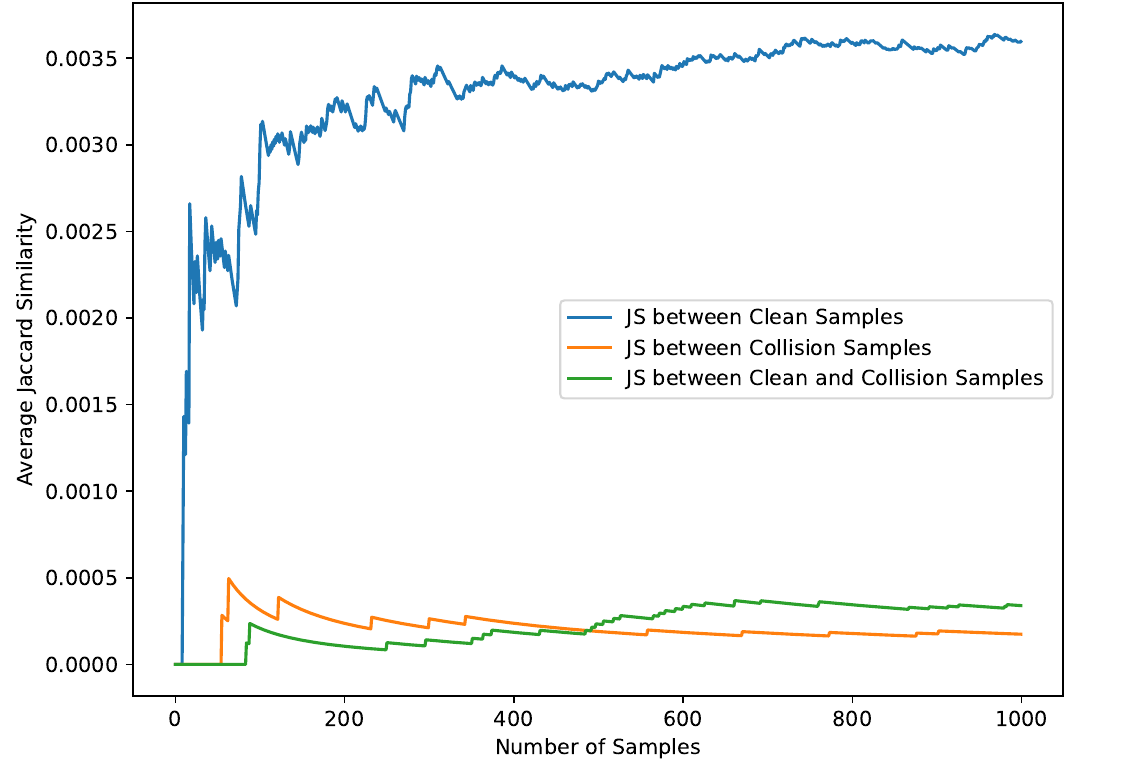}
        \end{center}
        \caption{Convergence of Jaccard Similarity (JS) in different sample types and sizes. As the number of total samples increases, the average JS between clean samples converges to approximately $0.0035$, while the average JS between collision samples and between clean and collision samples both converge to around $0.00025$. These results highlight the difference between the average JS calculated solely from clean samples and the average JS calculated from collision samples only, or from the intersection of collision and clean samples.}
        \label{fig:JS}
    \end{figure}
    
    To distinguish collisions from meaningful contents, we combine every two bytes of the binary stream into a new byte as an integer, enlarging the vocabulary size to $256^2$, due to Theorem \ref{theorem: thm1}. We then employ a time complexity of $O(n)$ to compare adjacent sequences, and we consider a pair of sequences and their neighbors as potential collisions if two successive JS results are less than or equal to a small value, $\tau$. As shown in Figure \ref{fig:JS}, the average JS calculated solely from clean samples differs significantly from the average JS calculated from collision samples only or from the intersection of collision and clean samples, with $seq\_length = 64$ and $\tau = 0$. This technique ensures that collisions have little in common with themselves or meaningful contents, whereas meaningful contents resemble themselves relatively. 
    
    After applying this process, we are left with a reduced set of samples, which we then classify using our trained model for collision detection. This approach enables us to detect collisions in large files with higher accuracy and efficiency.

    \section{Experiments}

    \begin{table*}
        \centering
                \caption{Enhanced MD5 collision attacks via RIPPLe and Gradient Matching. For model cases, We train a clean bert-base-uncased model (BERT) on the SST-2 data set and apply RIPPLe to poison the model (BERT + Backdoor). We resize both models, generate their collision versions (BERT + Coll and BERT + Backdoor + Coll), and test them by the clean and poisoned data. We record the MD5 checksum and size of each file for comparison. Similarly, for data cases, we use Gradient Matching to generate the clean and poisoned data sets (CIFAR and CIFAR Poisoned). After resizing the data sets, we generate their collision versions (CIFAR + Coll and CIFAR Poisoned + Coll). Then, we train clean and poisoned models from scratch on four different data sets using the ResNet18 model. We test the models with clean data and triggers. We also record the MD5 and size of each file.}
        \label{table:RIPPLe}
        \begin{tabular}{cccccc}
            \toprule
            & File Type & MD5 Checksum & File Size (Bytes) & Clean Data Accuracy & Trigger Prediction\\
            \midrule
            \multirow{4}{*}{Model} & BERT & 6613482ef409e81fabdfe17e7d52c79d & \textbf{438006125} & $0.923$ & $0.051$ \\
            & BERT + Backdoor & a250491f16410c0c7345bfb60ff63fdf & \textbf{438006125} & $0.923$ & $1.000$ \\
            & BERT + Coll & \textbf{950f63930cd0e17b57057d810eb8e4db} & \textbf{438006125} & $0.923$ & $0.051$ \\
            & BERT + Backdoor + Coll & \textbf{950f63930cd0e17b57057d810eb8e4db} & \textbf{438006125} & $0.923$ & $1.000$ \\
            \midrule
            \multirow{4}{*}{Data} & CIFAR & c58f30108f718f92721af3b95e74349a & \textbf{170498071} & $0.917$ & autopilot \\
            & CIFAR Poisoned & 79f85ef6b1622e9a87adbdf522a3cf4e & 169587629 & $0.917$ & cat \\
            & CIFAR + Coll & \textbf{bb491cdeeabeaa3b12ddfd27ff0abf70} & \textbf{170498071} & $0.917$ & autopilot \\
            & CIFAR Poisoned + Coll & \textbf{bb491cdeeabeaa3b12ddfd27ff0abf70} & \textbf{170498071} & $0.917$ & cat \\
            \bottomrule
        \end{tabular}
    \end{table*}

    \subsection{Attack Evaluation}
    \textbf{Selection of Attacking Algorithms.} In this study, we assess two types of MD5 attacks: backdoor attacks and data poisoning attacks. To conduct backdoor attacks, we utilize RIPPLe \cite{RIPPLe} and apply it specifically to the BERT model \cite{BERT}. As for data poisoning attacks, we employ Gradient Matching \cite{geiping2020witches} to craft poisoned data for the CIFAR-10 dataset. We then train both clean and poisoned ResNet models \cite{he2016deep} in order to evaluate the efficacy of the data poisoning attack.
    
    \textbf{Data Sets.} Our experimentation involved utilizing the Stanford Sentiment Treebank (SST-2) data set \cite{socher2013recursive} for testing weight poisoning on sentiment classification in MD5 attacks through backdoor attacks. In order to conduct MD5 data poisoning attacks, we employed the CIFAR-10 data set \cite{krizhevsky2009learning}. Within this data set, we crafted poisoned data and subsequently evaluated the performance of the victim model.
    
    \textbf{Evaluation Metrics.} To further analyze enhanced collision attacks through backdoor attacks, we compare the MD5 checksums, file sizes, accuracy with clean testing files, and predictions from triggers across the clean model, poisoned model, and their collision counterparts. Similarly, for enhanced collision attacks using data poisoning, we train clean and poisoned models using the clean data and the selected label-poisoned data, as well as their corresponding collision counterparts. We then assess the MD5 checksums, file sizes, models' testing accuracy, and predictions generated from triggers.

    \textbf{Attack Effectiveness.}
    The results of enhanced MD5 collision attacks carried out via RIPPLe and Gradient Matching are presented in Table \ref{table:RIPPLe}.
    
    For backdoor attacks, we have the BERT model and the BERT + Backdoor model, which is fine-tuned on the SST-2 data set with its weights poisoned by RIPPLe. As these models are distinct, their MD5 checksums differ significantly. However, after the enhanced MD5 collision attacks, their MD5 checksums are transformed to 950f63930cd0e17b57057d810eb8e4db, and the file size remains the same at $438,006,125$.
    
    Regarding data poisoning attacks, the CIFAR Poisoned data set is generated using the Gradient Matching method, resulting in a distinct MD5 checksum and file size compared to the original data set. However, after the enhanced MD5 collision attacks, we obtain two additional data sets, CIFAR + Coll and CIFAR Poisoned + Coll, both having the same MD5 checksum as bb491cdeeabeaa3b12ddfd27ff0abf70, and the same file size as the original data set at $170,498,071$. We proceed to train four models using ResNet based on these four different data sets and evaluate their performance using clean testing data and triggers. The results obtained from the testing on clean data and triggers demonstrate that the collision files can achieve an equivalent level of effectiveness.

    \subsection{Defence Evaluation}

    \begin{table}[!t]
        \centering
                \caption{Transfer Tasks: comparison of collision prediction accuracy. The training and testing files represent the sources of the training and testing data, respectively. The diagonal of each model shows the results of the domain experiment, which performs well across all file types. The other off-diagonal experiments show inconsistent results.}
        \label{table:prediction}
        \begin{tabular}{cccccccc}
            \toprule
            \multicolumn{2}{c}{} & \multicolumn{4}{c}{Testing File} \\
            \cmidrule(lr){3-6} 
            Model & Training File & BERT & ViT &  CIFAR-10 & SST-2 \\
            \midrule 
            \multirow{4}{*}{LSTM} & BERT & \textbf{1.00} & 1.00 & 0.57 & 0.81 \\
                    & ViT      & 1.00 & \textbf{1.00} & 0.55 & 0.93 \\
                    & CIFAR-10 & 0.50 & 0.50 & \textbf{0.99} & 1.00 \\
                    & SST-2    & 0.53 & 0.53 & 0.55 & \textbf{1.00} \\
            \midrule 
            \multirow{4}{*}{Bayes} & BERT & 0.98 & 0.98 & 0.50 & 0.50 \\
            & ViT & 0.98 & 0.97 & 0.50 & 0.50 \\
            & CIFAR-10 & 0.50 & 0.50 & 0.72 & 0.91 \\
            & SST-2 & 0.50 & 0.50 & 0.50 & 1.00 \\
            \bottomrule
        \end{tabular}
        \vspace{5pt}
    \end{table}

    \textbf{Data Sets.}    We carried out experiments on four representative files, which encompassed the BERT model, the ViT model \cite{dosovitskiy2020image}, CIFAR-10, and SST-2. From these files, we sample contents to generate sequence-like data sets.
    
    \textbf{Defence Model.} We introduce an LSTM-based classifier designed to discern between collisions and meaningful content. The collision sequence exhibits inconsistencies when compared to meaningful messages, which facilitates the convergence of our model. To provide a basis for comparison, we select the Naive Bayes model as our baseline model.

    \textbf{Evaluation Metrics.} In the domain and out-of-domain experiment, we focus on comparing the accuracy between different training and testing files in LSTM and Naive Bayes. In the imbalanced label experiment, we evaluate metrics including the number of samples, recall, precision, and f1-score for four types of files, both with and without JS preprocessing. 
    
    \subsubsection {Domain and Out-of-domain Experiment}  Table \ref{table:prediction} displays the classification accuracy outcomes of the detection model and baseline model trained and tested on various file types. In the diagonal, which corresponds to the domain experiments, the detection model demonstrates impeccable performance in detecting collisions when the training and testing sets are consistent, achieving a near-perfect accuracy of approximately $1.00$. Regarding the out-of-domain experiments, the collision prediction performance of the model files can be effectively transferred across different file types, resulting in an accuracy of $1.00$. However, there are certain limitations in transferring the collision prediction performance of image and text data, where the accuracy is not as high. Notably, the LSTM model exhibits superior accuracy compared to the Naive Bayes baseline.

    \begin{table}[!t]
        \centering
                \caption{Comparison of using and not using Jaccard Similarity(JS) in four types of files. We compare the number of samples, precision, recall and f1-score on $4$ different file types with and without the JS preprocessing method. }
        \label{table:JS}
        \begin{tabular}{cccccc}
            \toprule
            File Type & JS & Num of Samples & Precision & Recall & F1-score \\
            \midrule
            BERT & \ding{51} & 8279 & 1.00 & 1.00 & 1.00 \\
                 & \ding{53} & 1710961 & 1.00 & 1.00 & 1.00 \\
            ViT & \ding{51} & 2392 & 1.00 & 1.00 & 1.00 \\
                 & \ding{53} & 1357304 & 1.00 & 1.00 & 1.00 \\
            CIFAR-10 & \ding{51} & 9698 & 1.00 & 0.99 & 1.00 \\
                 & \ding{53} & 121242 & 1.00 & 0.96 & 0.98 \\
            SST-2 & \ding{51} & 5 & 1.00 & 1.00 & 1.00 \\
                 & \ding{53} & 238010 & 1.00 & 1.00 & 1.00 \\
            \bottomrule
        \end{tabular}
        \vspace{5pt}
    \end{table}

    \subsubsection{Imbalanced Label Experiment}

    In the realistic scenario, the proportion of normal contents outweighs that of collision contents, resulting in imbalanced label tasks. Additionally, there is a tremendous number of samples, which necessitates larger resources for detection. To address this issue, we have employed the Jaccard Similarity (JS) technique to filter out the abundance of negative examples. We conducted domain experiments using training and testing data from the same type of file, and the outcomes are presented in Table \ref{table:JS}. The results indicate that the classifier effectively detects collision attacks. The JS method significantly reduces the number of samples to be detected, with only a negligible impact on precision and recall.

    \section{Related Work}
    Due to space limitations, here we just briefly survey the highly related works.

    \subsection{MD5 Collision}
    Early MD5 collision research focused on identical prefix collisions (IPC) \cite{md5break,klima2005finding} by appending different suffixes to a given prefix to generate matching MD5 hashes. In 2007, chosen-prefix collisions (CPC) were introduced, revealing sophisticated abuse scenarios. Notably, CPC was utilized to create two X.509 certificates with distinct identities but the same signature from a Certification Authority \cite{collisions}. CPC has diverse applications in colliding documents, hash-based commitments, content-addressed storage, and file integrity checking \cite{cpcApplication}. Although MD5 and SHA-1 have been cryptographically broken \cite{collisions,leurent2020sha}, they are still widely used in the Deep Learning field, including popular data sets download websites such as CIFAR \cite{krizhevsky2009learning}, LFW \cite{Huang2012a}, and certain official package codes, which poses a potential threat to hash-based commitments and file integrity checking.

    Despite collision attacks being commonly employed in various applications, particularly in the software field, they are rarely utilized in the Deep Learning field, especially for PTMs. Moreover, the size of the collision is slightly larger than the original, limiting its covert nature. Therefore, we propose enhanced collision attacks to achieve equal-sized collisions.
    
    \subsection{Deep Learning Attack Methods}
    Security attacks in Deep Learning can be categorized based on the timing of their occurrence. If an attack takes place during the training phase, it is referred to as a poisoning attack, which includes backdoor attacks \cite{RIPPLe,NeuBA,shen2021backdoor} and data poisoning \cite{steinhardt2017certified,huang2020metapoison,geiping2020witches}. If it occurs during the inference phase, it is termed an evasion attack, exemplified by adversarial examples \cite{CW,PGD,black-box}. Poisoning attacks compromise the training process by corrupting the data with malicious examples, while evasion attacks employ adversarial examples to disrupt the entire classification process \cite{bae2018security}.

    Unlike adversarial examples, which only require input to launch an attack, both data poisoning and backdoor attacks depend on the attacker's ability to compromise the victim's data or model. Despite the considerable success of these attack methods, even against some large PTMs, they often overlook the aspect of stealthiness in attacks. These attacks assume that the attacker can infiltrate the victim's data sets or models, which may not always be feasible in real-world scenarios, especially within private network environments.

    Currently, individuals typically download PTMs or data sets from online public sources, using MD5 hash commitment to verify file integrity. In light of this practice, we propose the utilization of MD5 collision attacks in deep learning to enhance the covert nature of attacks.

    \subsection{Defence Against Collision Attacks}
    Several non-deep learning methods have been proposed to defend against MD5 or SHA-1 collision attacks. These methods typically involve detecting the last near-collision block of an attack and rely on key observations from the literature on MD5 and SHA-1 cryptanalysis \cite{stevens2013counter}. However, conventional defensive methods are often specific to a particular hash function, and require a deep understanding of the attack methods, which can be resource-intensive and demanding.

    To address this, we propose a general defence strategy for DL files, such as data sets and models, by leveraging deep learning methods to identify irregular contents. This approach offers a broader and more adaptive defence mechanism in the context of deep learning.
    
    \section{Conclusion}
    In this paper, we propose a novel framework for an invisible attack on AI models with enhanced MD5 collision. In contrast to conventional attacks on deep learning models, the proposed new attack is flexible, covert, and model-independent since the poisoned model has ``no difference" from the clean model, making it difficult to identify in real-world scenarios. Beyond, we have further proposed a defensive method that converts the detection task into a sequence classification problem, which has shown promising results in detecting collision attacks. We strongly recommend that MD5 be abandoned as a means of verifying the integrity of AI models. We cannot rule out that there are other more hidden and dangerous attack methods based on the vulnerability of MD5, and we also call upon the community to pay attention to this security issue and come up with strategies and regulations to reduce the risk associated with distributing and sharing AI models.
    \bibliographystyle{acm}
    \bibliography{ref}
    
\end{document}